\documentclass[conference]{IEEEtran}

\usepackage{cite}

\usepackage{graphicx}
\usepackage{amsmath,amssymb,amsfonts,amsthm}
\usepackage{bm}
\usepackage{mathrsfs}

\usepackage{booktabs}
\usepackage{multirow}
\usepackage{tabularx}
\usepackage{makecell}
\usepackage{array}
\usepackage{threeparttable}

\usepackage{caption}
\usepackage{subcaption}

\usepackage{algorithm}
\usepackage{algorithmic}
\usepackage{enumitem}

\usepackage{tikz}
\usepackage{pgfplots}
\pgfplotsset{
    compat=1.18,
    every axis plot/.append style={line width=1.2pt}
}
\usetikzlibrary{
    arrows.meta,
    fit,
    backgrounds,
    shapes.geometric
}

\usepackage{xcolor}
\usepackage{xspace}

\usepackage[disable,colorinlistoftodos]{todonotes}

\newtheorem{thm}{Theorem}

\newtheorem{ass}{Assumption}

\newcommand{\calm}{\mathcal{M}}
\newcommand{\cals}{\mathcal{S}}

\newcommand{\calt}{\mathcal{T}}

\newcommand{\cali}{\mathcal{I}}

\newcommand{\caly}{\mathcal{Y}}

\newcommand{\bx}{\boldsymbol{x}}
\newcommand{\by}{\boldsymbol{y}}

\newcommand{\bp}{\boldsymbol{p}}

\newcommand{\opt}{\texttt{OPT}\xspace}
\newcommand{\alg}{\texttt{ALG}\xspace}

\newcommand{\CR}{\texttt{CR}\xspace}

\newcommand{\osac}{\texttt{OSARA}\xspace}

\newcommand{\csp}{\texttt{CSP}\xspace}
\newcommand{\sara}{\texttt{SARA}\xspace}
\newcommand{\opa}{\texttt{OPA}\xspace}

\newcommand{\expp}{\texttt{EXP}\xspace}

\newcommand{\nr}{\texttt{NR}\xspace}

\newcommand{\mpc}{\texttt{MPC}\xspace}

\newcommand{\ie}{i.e., }

\newcommand{\fig}[1]{Fig.~\ref{#1}}

\newif\ifanonymous

\title{Data-Driven Online Slice Admission Control and Resource Allocation
in NextG Mobile Networks}

\ifanonymous

\author{\IEEEauthorblockN{Anonymous Authors}}

\else

\author{
\IEEEauthorblockN{
Muhammad Sulaiman\IEEEauthorrefmark{1},
Bo Sun\IEEEauthorrefmark{2},
Mohammad A. Salahuddin\IEEEauthorrefmark{1},
Xiaoqi Tan\IEEEauthorrefmark{3},
Raouf Boutaba\IEEEauthorrefmark{1}
}

\IEEEauthorblockA{\IEEEauthorrefmark{1}
University of Waterloo\\
\{m4sulaim, mohammad.salahuddin, rboutaba\}@uwaterloo.ca
}

\IEEEauthorblockA{
\begin{tabular}{cc}
\IEEEauthorrefmark{2} University of Ottawa
&
\IEEEauthorrefmark{3} University of Alberta
\\
bsun3@uottawa.ca
&
xiaoqi.tan@ualberta.ca
\end{tabular}
}
}

\fi

\begin{document}

\maketitle

\begingroup
\renewcommand\thefootnote{}
\footnotetext{This work has been submitted to the IEEE for possible publication.
Copyright may be transferred without notice, after which this version may no
longer be accessible.}
\addtocounter{footnote}{-1}
\endgroup

\begin{abstract}
Virtualization in 5G and beyond networks enables the creation of virtual networks (i.e., network slices) tailored to the needs of different applications. To maximize revenue under limited infrastructure resources, InPs must decide in real time whether to admit incoming slice requests (SRs) based on their resource demands and offered values, while accounting for the opportunity cost of consuming scarce resources. To address this challenge, we introduce \underline{O}nline \underline{P}ricing-based Slice \underline{A}dmission Control and Resource Allocation (\opa) framework. This framework dynamically assigns pseudo-prices to resources that capture long-term scarcity and anticipated inter-temporal opportunity costs. The short-term admission and resource allocation decisions for each SR are then guided by these prices. Additionally, we design an exponential pricing strategy that guarantees bounded worst-case performance. To improve practical performance, we further develop a data-driven exponential pricing approach that learns from historical data. Evaluations on a real-world network topology show that it improves the revenue by 32.2\% and 26.7\% over state-of-the-art DRL and optimization-based approaches, respectively, while reducing computational cost by an order of magnitude relative to the latter.
\end{abstract}

\begin{IEEEkeywords}
5G, network slicing, admission control,
resource allocation, data-driven algorithms, online algorithms
\end{IEEEkeywords}

\section{Introduction}
Network Function Virtualization (NFV) and Software-defined Networking (SDN) are enabling technologies to realize network slicing in 5G and beyond mobile networks. Network slicing  creates isolated virtual networks, atop an underlying physical infrastructure, each tailored to meet the diverse requirements of distinct service modalities, such as enhanced Mobile Broadband (eMBB) and Ultra-Reliable Low-Latency Communications (URLLC). Anticipated future developments in 5G and beyond networks envision a commoditized ecosystem \cite{TNET-1, TNSM_GNN, ONETS, vs1}, where service providers can procure network slices from infrastructure providers (InPs) to cater to specific market segments. However, constrained by limited resources, an InP may find itself unable to accommodate all slice requests (SRs). To maximize its revenue, the InP could deploy a slice admission control (SAC) mechanism that evaluates the resource demands of varying SRs against their offered revenues.

Slice Admission Control and Resource Allocation (\sara) in 5G and beyond networks presents several challenges. The first challenge arises from the \textit{online nature of the problem}. In an online \sara (\osac), SRs arrive sequentially over time, requiring admission and resource allocation decisions to be made immediately without waiting for future requests. This distinguishes \osac from offline optimization and requires algorithms that operate effectively under uncertainty. The second challenge in \sara lies in \textit{accounting for the inter-temporal impact of both admission control and resource allocation decisions}. In an end-to-end (E2E) mobile network, a SR consumes resources across multiple heterogeneous domains, including RAN, transport links, and core \cite{vs2, TNSM_GNN, sulaiman2022coordinated}. Different feasible allocations for the same SR can induce distinct resource bottlenecks,  affecting the system’s ability to accommodate future requests and making admission and allocation decisions inherently intertwined. The joint \sara problem is \textit{computationally challenging}. In practice, even when operating on small batches of SRs, state-of-the-art approaches resort to decomposition techniques and heuristic-aided optimization \cite{vs1, vs2, kansaas}. Other works simplify the problem further by assuming predetermined resource allocation per slice, thereby eliminating alternative allocation choices \cite{kansaas, Ghina_2020, vanHuynh.2019} and overlooking the critical impact of resource allocation decisions.

The final challenge concerns traditional AI-based \sara algorithms, particularly reinforcement learning (RL) approaches \cite{sulaiman2022coordinated, Ghina_2020, vanHuynh.2019, Slicepilot}. Although effective, their admission decisions are typically generated by black-box policies whose internal logic is difficult to interpret, making it unclear how factors such as resource scarcity influence admission outcomes.

\begin{figure}[t!]
 \centering
  \includegraphics[width=0.95\linewidth]{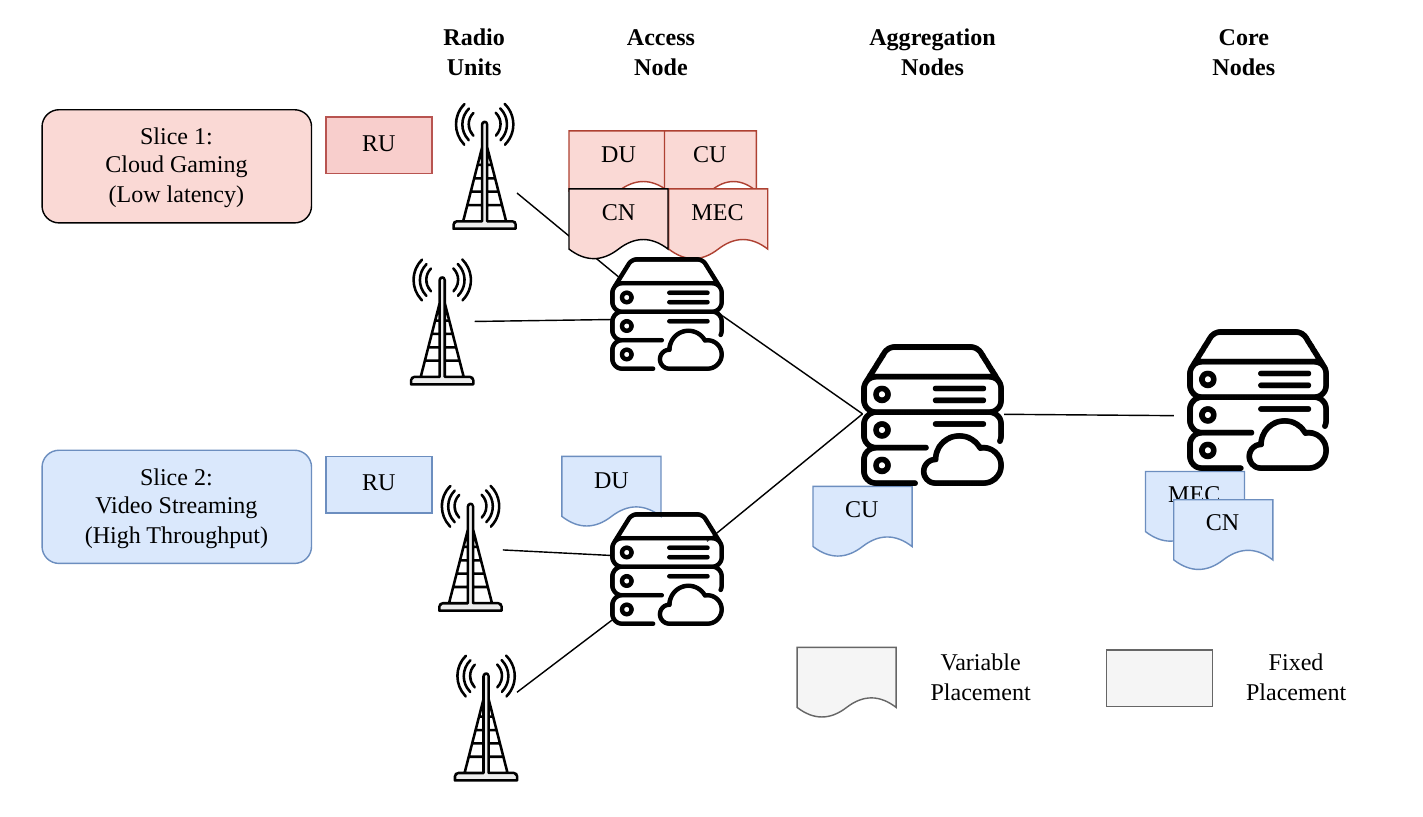}
    \caption{\small{Example VNF placements for cloud gaming (CG) and video streaming (VS) slices. CG requires CU/DU placement near the access node due to stringent latency, whereas VS allows more flexibility.}}
    \label{fig:slice_model}
\end{figure}

To address these challenges, we start by modeling resource allocation as a
virtual network embedding (VNE) problem, which is a natural abstraction for modeling
slice-level resource allocation in 5G and beyond networks. As illustrated in
Fig.~\ref{fig:slice_model}, each SR induces a virtual network composed of
interconnected virtual network functions (VNFs), \ie the Radio Unit (RU), Distributed Unit (DU), Centralized Unit (CU), Core Network (CN) and Mobile Edge Compute (MEC). These VNFs have heterogeneous resource and latency requirements and, if the SR is admitted, must be
mapped onto a shared physical substrate. To solve the joint \sara problem, we propose a novel Online Pricing-based Slice Admission Control and Resource Allocation (\opa) framework that decouples long-term admission decisions and resource pricing from short-term resource allocation, while maintaining their interaction via adaptive resource prices. Beyond proposing the pricing-based framework, we design an exponential pricing strategy that guarantees worst-case performance even in adversarial settings. To improve empirical performance, we further develop an adaptive exponential pricing approach that learns the pricing parameters through data. Our main contributions are:


\begin{itemize}[leftmargin=1.5em,labelsep=0.45em]
    \item We formulate \textbf{\osac as a joint problem} to explicitly capture their interdependence. Unlike prior work that assumes predetermined resource demands, our formulation treats resource allocation as a decision variable. 

    \item We propose \textbf{a novel Online Pricing-based \sara frame-work} that decouples long-term admission control from short-term resource allocation through resource pseudo-prices. This reduces computational complexity while retaining the ability to account for long-term system impact.

    \item We develop \textbf{a simple yet effective exponential pricing strategy for \osac} that is proven to attain a bounded worst-case performance under the competitive analysis framework \cite{borodin2005online}.
    Building on this design, we develop an adaptive variant that leverages offline training to enhance empirical performance. 

    \item Unlike existing \sara approaches that rely on black-box policies, our framework enables \textbf{ interpretability} through utilization dependent resource pricing, allowing InPs to analyze admission decisions via cost-revenue signals.
    
\end{itemize}

\section{Related Work}\label{sec:bg_related}

\noindent\textbf{Resource allocation.} Prior work on RAN-only resource allocation employs RL, Bayesian optimization, conventional optimization, and model-driven methods~\cite{liu2021constraint,noms23,liu2021onslicing,liu2022atlas,kasgari2018stochastic,microopt,zipper}. Although effective in the RAN, these approaches do not address E2E orchestration, where heterogeneous resources must be allocated across multiple network segments with seconds to minutes of optimization timescales. For E2E allocation, \cite{koo2019deep} considers multidimensional resources and delay constraints but abstracts the slice and substrate as aggregated nodes. RAN-function placement and routing is addressed in \cite{gao2021deep}, while \cite{vs1} jointly optimizes access control, CU selection, and resource reservation. Neither models per-VNF delay constraints. Although \cite{solozabal2019virtual} considers delay-aware VNF placement, it assumes a star substrate topology. Joint admission and allocation are also studied using DRL and MARL~\cite{vanHuynh.2019,RR_2019,sulaiman2022coordinated,noms22,TNSM_GNN,Slicepilot}, with \cite{TNSM_GNN} using GNNs to improve topology generalization. SlicePilot~\cite{Slicepilot} places VNFs across heterogeneous cloud tiers but assumes aggregate slice latency and a linear substrate. Overall, prior work often simplifies the substrate or omits per-VNF resource and delay constraints, limiting its ability to capture joint E2E placement and routing.\\

\noindent\textbf{Slice admission control.} Most 5G SAC approaches use RL \cite{sulaiman2022coordinated,vanHuynh.2019,RR_2019,TNSM_GNN,Roig.2019,bega2019machine} or multi-armed bandits \cite{ONETS}. RL-based admission and congestion control is studied in \cite{Ghina_2020}, while \cite{vanHuynh.2019} jointly considered SAC and resource allocation. \cite{bega2019machine} used separate agents to estimate the revenue from accepting or rejecting slice requests. Multi-agent DRL (MARL) was also used for joint SAC and resource allocation in \cite{sulaiman2022coordinated,noms22,TNSM_GNN}, with \cite{TNSM_GNN} incorporating GNNs to support large, dynamic substrate topologies. Outside RL, \cite{ONETS} modeled online SAC as a budgeted lock-up bandit. However, it only optimizes per-round rewards instead of over a horizon. Competitive markets with multiple infrastructure providers and rational tenants were considered in \cite{TNET-1}, but also assumed fixed resource-allocation and did not jointly optimize allocation. On the other hand, yield-driven overbooking methods \cite{vs1,vs2,kansaas} periodically solve batched SARA problems using traffic forecasts. These model-predictive-control approaches balance overbooking revenue against SLA-violation risk over a finite prediction horizon.
\section{Problem Statement} \label{sec:prob_statement}
\subsection{\underline{S}lice \underline{A}dmission Control and \underline{R}esource \underline{A}llocation (\sara)} \label{sec:sara}

We address the \sara problem for 5G and beyond networks. Consider a substrate network with $M$ resources, where each resource $m\in \calm:=\{1,\dots,M\}$ represents a generic resource, e.g., the bandwidth of a link or the computing resource of a node. Let $C_m$ denote the capacity of resource $m$.  
We consider a time-slotted system with a slot set $\calt:=\{1,\dots,T\}$. A sequence of SRs  
$\cals :=\{1,\dots,S\}$ arrive over time. The information of each SR $i$ is represented by $I_i:= \{v_i, \boldsymbol{\xi}_i, \calt_i\}$. $v_i$ is the value of SR, indicating the price the service provider is willing to pay for the request. $\calt_i:= \{t^a_i,\dots, t^d_i\}$ is the service period, where $t^a_i$ and $t^d_i$ are the arrival and departure slots of the SR, respectively. 
$\boldsymbol{\xi}_i$ is a slice feature vector that encodes all the necessary information to determine the set of feasible resource allocations. 
Formally, feasibility is defined by a set of constraints $\mathcal{C}_i(\boldsymbol{\xi}_i)$ induced by the slice request and the substrate network. For example, a transport network bandwidth requirement may be realized by allocating capacity across multiple transport network links subject to capacity, flow-conservation and delay constraints. We denote by $\mathcal{F}_i(\boldsymbol{\xi}_i)$ the set of resource allocations that satisfy $\mathcal{C}_i(\boldsymbol{\xi}_i)$.
Details of $\mathcal{C}_i(\boldsymbol{\xi}_i)$ and $\mathcal{F}_i(\boldsymbol{\xi}_i)$ are provided in Section~\ref{sec:vne}.

\noindent \textbf{Offline problem.} 
For each SR $i\in\cals$, the problem is to determine the admission control decision $x_i \in \{0,1\}$, i.e., whether to admit SR $i$, and a corresponding resource allocation $\by_i := \{y_{i,m}^t\}_{m\in\calm,t\in\calt}$. $y_{i,m}^t$ is the amount of resource $m$ allocated to SR $i$ at time slot $t$. Given $x_i$, let $\caly_i(x_i)$ denote the feasible set of resource allocation $\by_i$. 
If SR $i$ is rejected (i.e., $x_i = 0$), no resource is allocated, and then $\caly_i(0):= \{\by_i: y_{i,m}^t = 0,\forall m\in\calm, t\in\calt\}$.
If SR $i$ is admitted (i.e., $x_i = 1$), a feasible resource allocation must guarantee the feasibility requirement, therefore, $\caly_i(1) := \{\by_i : \by_i \in \mathcal{F}_i(\boldsymbol{\xi}_i)\}$. 

The goal of \sara is to maximize the total value of admitted slice requests, while ensuring that all selected allocations are feasible and that the capacities of all resources are respected. Let $\cali :=\{v_i, \boldsymbol{\xi}_i, \calt_i\}_{i\in\cals}$ denote an instance of the problem.
Given the information $\cali$ of all slices, the offline problem is:
\begin{subequations}
\label{p:sac}
\begin{align}
\label{eq:sac-obj}
    \max_{x_i, \by_{i}} \quad& \sum\nolimits_{i\in\cals} v_i x_i\\
    \label{eq:sac-capacity}
    {\rm s.t.}\quad & \sum\nolimits_{i\in\cals} y_{i,m}^t \le C_m, \forall m\in \calm, t\in\calt,\\ 
    \label{eq:sac-qos}
    &x_{i} \in \{0,1\}, \by_i \in \caly_i(x_i), \forall i\in\cals,
\end{align}
\end{subequations}

\subsection{Resource Allocation as \underline{V}irtual \underline{N}etwork \underline{E}mbedding (VNE)} \label{sec:vne}
In this subsection, 
we formally define the constraints $\mathcal{C}_i(\boldsymbol{\xi}_i)$ and feasible solutions
$\mathcal{F}_i(\boldsymbol{\xi}_i)$ for resource allocation.
In particular, we model slice-level resource allocation as a VNE problem on a substrate network.

\noindent \textbf{Substrate network.} Metropolitan 5G networks follow a hierarchical multi-tier structure with access, aggregation, and core nodes \cite{NGMN_topo, ituseriesG8300, metrohaul} (cf. Fig. \ref{fig:topos}). Higher tier nodes provide greater compute capacity but incur additional transport latency. Access nodes connect to aggregation nodes via Tier-1 links, and aggregation nodes connect to core nodes via Tier-2 links. We model this metropolitan 5G substrate network as an undirected graph $G=(\mathcal V,\mathcal E)$, where $\mathcal V$ denotes the set of substrate nodes and $\mathcal E$ denotes the set of substrate links connecting them. Node $v\in\mathcal V$ has CPU and memory capacities $C^{\text{cpu}}_v$ and $C^{\text{mem}}_v$, respectively. Link $e\in\mathcal E$ has bandwidth capacity $B_e$ and propagation delay $d_e$.

\begin{figure}[h!]
 \centering
  \includegraphics[width=\linewidth]{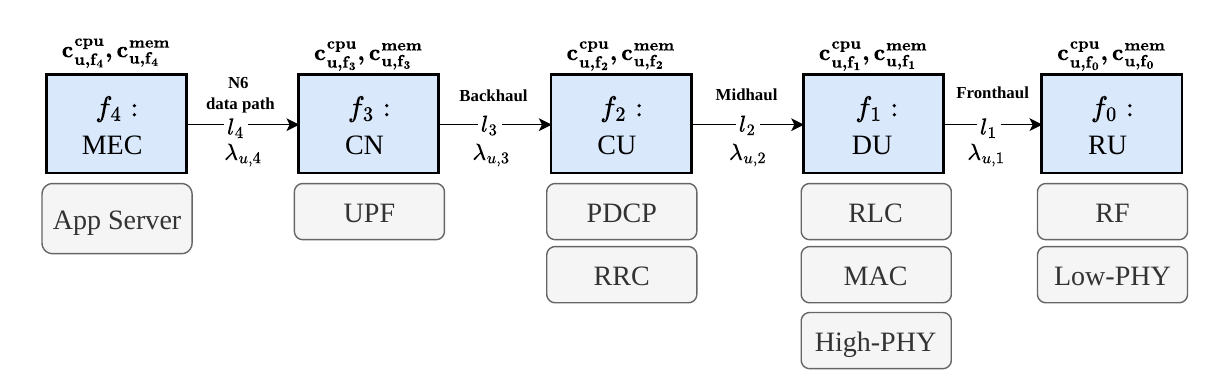}
    \caption{Virtual network induced by SR $i$.}
    \label{fig:VNF_model}
\end{figure}

\noindent \textbf{Slice requests.} 
We denote by $\mathcal{U}$ the set of slice types.  
Each type $u\in\mathcal U$ is modeled as a virtual network (VN) as illustrated in Fig.~\ref{fig:VNF_model}, which consists of a chain of five VNFs in $\mathcal F=\{f_0,\dots,f_4\}$, interconnected by four
virtual links (VLs) $\mathcal L=\{l_1,\dots,l_4\}$. Aligned with 3GPP and \mbox{ITU-T} recommendation \cite{ituseriesG66}, we consider split 7.2. For slice type $u$, each VNF
$f\in\mathcal F$ requires $c^{\text{cpu}}_{u,f}$ CPU and $c^{\text{mem}}_{u,f}$ memory,
while each VL $l\in\mathcal L$ requires bandwidth $\lambda_{u,l}$. In addition, slice
type $u$ specifies delay constraints $D_{u,f}$ up to VNF $f$
(e.g., RU$\rightarrow$DU, RU$\rightarrow$CU). Recall that SRs are specified by
$I_i:=\{v_i,\mathcal T_i,\boldsymbol{\xi}_i\}$. The slice feature vector $\boldsymbol{\xi}_i$ contains the complete specification
of the VN induced by SR $i$, including its slice type $u_i$,
originating access node $n_i^{\text{src}} \in \mathcal V$, per-VNF and per-VL
resource demands ($\{c^{\text{cpu}}_{u_i,f},c^{\text{mem}}_{u_i,f}\}_{f\in\mathcal F}$,
$\{\lambda_{u_i,l}\}_{l\in\mathcal L}$) and the delay constraints $\{D_{u_i,f}\}_{f\in\mathcal{F}}$.

\noindent \textbf{Embedding decision variables.} If SR $i$ is admitted, its VN must be embedded onto the substrate network by
: (i) placing each VNF on a substrate node, and (ii) routing each VL
along a single substrate path. Let $q \in \mathcal Q$ denote a simple path in the substrate network, where a path is
defined as a sequence of substrate links.
Let $\delta^i_{f,v}\in\{0,1\}$ indicate whether VNF $f$ of SR $i$ is placed on
substrate node $v$, and let $\rho^i_{l,q}\in\{0,1\}$ indicate if virtual link $l$ of
SR $i$ is routed along path $q\in\mathcal Q$. 

\noindent \textbf{Resource allocation constraints $\mathcal{C}_i(\boldsymbol{\xi}_i)$.}

\noindent (i) \textit{Admission and embedding constraints}:
If SR $i$ is admitted (i.e., $x_i =1$), each of its VNFs must be placed on exactly one eligible
substrate node, and each of its VLs must be routed on exactly one substrate path. This is
enforced by:
\begin{align}
x_i &= \sum_{v \in \mathcal V_{i,f}} \delta^i_{f,v},\ \forall f \in \mathcal F,
\qquad
x_i = \sum_{q \in \mathcal Q} \rho^i_{l,q},\ \forall l \in \mathcal L,
\label{eq:vne_adm}
\end{align}
where $\mathcal V_{i,f} \subseteq \mathcal V$ is the set of eligible substrate nodes
for VNF $f$. The RU of SR $i$ is fixed at its originating access node,
i.e., $\mathcal V_{i,f_0}=\{n^{\text{src}}_i\}$. 

\noindent (ii) \textit{Routing (flow-conservation) constraints}:
Let $\mathrm{src}(l), \mathrm{dst}(l)\in\mathcal F$ denote the source and destination VNFs
of virtual link $l$. We adopt single-path routing for each VL, which can be modeled as an
unsplittable multi-commodity flow. For each SR $i$, virtual link $l$, and
substrate node $v$, flow conservation is enforced as:
\begin{equation}
\sum_{\substack{q\in\mathcal Q:\\ \mathrm{src}(q)=v}} \rho^i_{l,q}
-
\sum_{\substack{q\in\mathcal Q:\\ \mathrm{dst}(q)=v}} \rho^i_{l,q}
=
\delta^i_{\mathrm{src}(l),v}
-
\delta^i_{\mathrm{dst}(l),v},
\ \forall v\in\mathcal V,\; l\in\mathcal L,
\label{eq:vne_flow}
\end{equation}
where $\mathrm{src}(q), \mathrm{dst}(q) \in \mathcal V$ denote the source and destination nodes of path $q$. We consider that $\emptyset \in \mathcal Q$ with
$\mathrm{src}(\emptyset)=\mathrm{dst}(\emptyset)$ to model co-location of VNFs.

\noindent (iii) \textit{Capacity constraints}:
For a given SR $i$, the embedding must respect the resource
capacities of nodes and links, \ie
\begin{equation}
\begin{array}{@{}l@{\;}c@{\;}l@{\quad}l@{}}
\displaystyle
\sum_{f\in\mathcal F} c^r_{u_i,f}\delta^i_{f,v}
&\le&
C^r_v,
& r\in\{\mathrm{cpu},\mathrm{mem}\},\ \forall v\in\mathcal V,
\\[2mm]
\displaystyle
\sum_{l\in\mathcal L}
\sum_{\substack{q\in\mathcal Q\\ e\in q}}
\lambda_{u_i,l}\rho^i_{l,q}
&\le&
B_e,
& \forall e\in\mathcal E.
\end{array}
\label{eq:vne_capacity}
\end{equation}

Per-SR constraints in~\eqref{eq:vne_capacity} combined with the system-level capacity constraint \eqref{eq:sac-capacity}, which aggregates allocations across all admitted slices and time slots, guarantee that resource usage remains feasible with respect to residual substrate capacities throughout the time horizon.

\noindent (iv) \textit{Delay constraints}: For each SR $i$, the E2E delay accumulated along the substrate paths
supporting the virtual links up to any intermediate VNF must not exceed the corresponding
delay bound. Formally,
\begin{equation}
\sum_{\substack{l\in\mathcal L:\\ l \le f}}
\sum_{q\in\mathcal Q}
\rho^i_{l,q}
\left(\sum_{e\in q} d_e\right)
\le D_{u_i,f},
\quad \forall i,\; f \in \{1,2,3,4\}.
\label{eq:vne_delay}
\end{equation}

\noindent \textbf{Feasible resource allocation set $\mathcal{F}_i(\boldsymbol{\xi}_i)$.} For a SR $i$, the set $\mathcal{F}_i(\boldsymbol{\xi}_i)$ consists of all
virtual network embeddings $\{\delta^i_{f,v}, \rho^i_{l,q}\}$ that satisfy
constraints (2)--(5) when the slice is admitted
(i.e., $x_i=1$). Each feasible embedding induces a corresponding abstract resource allocation
$\by_i$ over the service period $\mathcal T_i$, and vice versa. Specifically, let each abstract resource index $m\in\mathcal M$ correspond to either:
(i) a node resource $(v,\mathrm{cpu})$, $(v,\mathrm{mem})$, or
(ii) a link resource $(e,\mathrm{bw})$.
Then, for any admitted SR $i$ and time slot $t\in\mathcal T_i$,
the induced abstract allocation $y_{i,m}^t$ is defined as follows:

\begin{equation}
\begin{aligned}
y_{i,(v,r)}^t
 &= \sum_{f\in\mathcal F} c^r_{u_i,f}\delta^i_{f,v},
 && r\in\{\mathrm{cpu},\mathrm{mem}\},\\
y_{i,(e,\mathrm{bw})}^t
 &= \sum_{l\in\mathcal L}
    \sum_{\substack{q\in\mathcal Q\\ e\in q}}
    \lambda_{u_i,l}\rho^i_{l,q}.
\end{aligned}
\end{equation}

\subsection{Online Slice Admission Control and Resource Allocation}
We also formulate an online version of \sara (\osac),
where the set of SRs arrive one by one. For each arrival, we must make irrevocable admission and resource allocation decisions immediately without the information of future SRs. Unlike some prior works that assume batch processing of SRs \cite{vs1, vs2, kansaas}, our formulation addresses the fully online setting with per-arrival decisions. The batch model is therefore a restricted instance of our framework.

Let $\alg(\cali)$ and $\opt(\cali)$ denote the total values obtained by an online algorithm and the offline algorithm, respectively, on instance $\cali$. The performance of the online algorithm is evaluated by its competitive ratio ($\CR$), i.e., $\CR = \max_{\cali \in \Omega} {\opt(\cali)}/{\alg(\cali)}$, where $\Omega$ is the set of all possible instances. $\CR$ is a classic information-theoretic performance metric, which quantifies the performance of an online algorithm versus the offline algorithm in the worst-case scenario under the framework of competitive analysis~\cite{borodin2005online}. 
An algorithm with bounded \CR ensures robustness, and we aim at designing an online algorithm that minimizes $\CR$.
\section{\underline{O}nline \underline{P}ricing-based Slice \underline{A}dmission Control and Resource Allocation (\opa)}
\label{sec:solution}

We propose $\opa$, an online price-based framework for \osac. The core idea is to estimate the cost of serving each slice based on the current state of the substrate network, and admit the slice only if its value (i.e., revenue) exceeds this cost. To this end, 
we introduce a pricing function 
$\phi(\boldsymbol{z})$
that maps the current network state $\boldsymbol{z}$ (e.g., resource utilization) to resource prices. For each arriving slice $i$, the unit price of resource $m$ at time $t$ is given by $p_{m,t}^{(i)} = \phi(\boldsymbol{z})$. 
Different pricing rules can be instantiated within this framework, including fixed pricing, utilization-based pricing, dual-driven pricing, or learning-based pricing schemes. 
Given the resulting price vector $\bp^{(i)}$, $\opa$ estimates the cost of slice $i$ by solving the following single-slice cost-minimization problem.

\noindent \textbf{\underline{C}ost-minimization \underline{S}ingle-slice \underline{P}roblem (\csp).}
For each slice $i\in\cals$, given the price vector $\bp^{(i)}$ and slice information
$\{\boldsymbol{\xi}_i,\calt_i\}$, $\opa$ solves $\csp(\bp^{(i)};\boldsymbol{\xi}_i,\calt_i)$
to find a feasible allocation that minimizes the cost of serving slice $i$:
\begin{subequations}
\label{p:csp}
\begin{align}
\min_{\by_i} \quad 
    & \sum_{t\in\calt_i}\sum_{m\in\calm} p_{m,t}^{(i)}\, y_{i,m}^t \\
{\rm s.t.}\quad 
    & \by_i \in \mathcal{F}_i(\boldsymbol{\xi}_i), \\
    & 0 \le y_{i,m}^t \le R_m, \quad \forall m\in\calm,\ t\in\calt_i,
\end{align}
\end{subequations}
where $R_m$ denotes the maximum allocatable capacity of resource $m$ (with $R_m \le C_m$). The resource allocation of each slice $i$ for resource $m$ is constrained by $R_{m}$ to prohibit the slice from exhausting resource $m$. Let $\tilde{\by}_i$ denote a feasible solution of \csp, with corresponding cost
$\tilde{c}_i$. Then the admission control only admits slice $i$ if the slice's value is larger than the cost $\tilde{c}_i$. When admitted, the allocation $\tilde{\by}_i$ is used.

The effectiveness of $\opa$ depends on the design of the pricing function $\phi(z)$. The pricing function must address three fundamental challenges for effective slice admission control and resource allocation within this framework. First, in the absence of knowledge about future SR arrivals, it must balance conservative and greedy behaviors by accounting for the opportunity cost of resource consumption, preventing premature exhaustion of scarce resources while avoiding under-utilization. Second, it should strategically differentiate resources according to their structural importance within the network. For example, low-delay links may be inherently more valuable because they can accommodate slices with stricter QoS requirements, and should therefore be priced accordingly. Finally, since \csp may be solved approximately, the pricing function must be robust to this sub-optimality. In the following, we describe our approach to solving \csp, followed by our proposed pricing strategy in Section~\ref{sec:expp}.
\smallskip

\noindent \textbf{Practical Solvability for \csp.}
\label{sec:csp_practical}
Although smaller than the full joint admission-and-allocation problem, the single-slice cost-minimization problem (\csp) remains NP-hard because it jointly optimizes discrete placement and routing under node-capacity, link-capacity, and cumulative-delay constraints. It subsumes NP-hard problems such as delay-constrained unsplittable multi-commodity routing \cite{mcp_hard}. However, \csp\ removes admission variables and inter-slice coupling, so its size depends only on the substrate and one slice request. Slice-specific resource and delay constraints further prune infeasible nodes and links, substantially reducing the candidate space. Therefore, modern MIP solvers such as Gurobi can obtain high-quality feasible embeddings within short runtimes, as shown in Section~\ref{sec:results}, making \csp\ practical for online deployment.

Inspired by \cite{kestrel}, we adopt a two-tier strategy that combines time-limited optimization with a lightweight heuristic fallback. The InP specifies a per-slice solver time limit, i.e.,  if optimality is not proven within this limit, the best feasible solution found is used. If no feasible solution is available, the node-ranking heuristic in Section~\ref{sec:node-ranking} greedily constructs an embedding based on residual capacities and delay constraints. This strategy provides predictable decision latency and remains robust to rare difficult instances.

\section{\underline{EX}ponential \underline{P}ricing (\expp) with Worst-Case Guarantees} \label{sec:expp}

We introduce our pricing strategy focusing on worst-case performance guarantees under the competitive analysis framework~\cite{borodin2005online}. $\expp$ is designed 
to provide robustness against adversarial or highly uncertain SR sequences. For this, $\expp$ requires the following assumptions.

\begin{ass}\label{ass1}
For each slice $i\in\cals$, its value $v_i$ and any feasible resource allocation $\by_i \in \caly_i(1)$ satisfy conditions:

\begin{enumerate}[label=(\roman*)]
\item value density of each slice is bounded, i.e., 
\begin{align} \label{eq:ass-1}
    \frac{v_i}{T_i \sum_{m\in\calm}y_{i,m}^t} \in [L,U], \forall t\in \calt_i,
\end{align}
where $T_i : = |\calt_i|$ is the length of the stay duration.

\item variation of resource allocation is upper bounded, i.e.,
\begin{align}\label{eq:ass-2}
    \frac{\sum_{m\in\calm}y_{i,m}^t}{\min_{m\in\calm: y_{i,m}^t>0}y_{i,m}^t} \le V, \forall t \in\calt_i.
\end{align}

\item resource allocation is small compared to capacity, i.e.,
\begin{align}\label{eq:ass-3}
    \max_{i\in\cals,t\in\calt_i} y_{i,m}^t < \frac{C_m}{1+ \log_2(\sigma UKV/L + 1)}, \forall m\in\calm,
\end{align}
where $K = \max_{i\in\cals} T_i$ is the maximum stay duration.

\item $\sigma$-approximate solvability of \csp:
There exists an algorithm for \csp\ such that for each slice $i$,
\begin{align}
    \tilde{c}_i \le \sigma c_i^*,
\end{align}
where $c_i^*$ denotes the optimal cost of \csp\ and $\tilde{c}_i$ is the cost returned by a sub-optimal solver, with $\sigma \ge 1$.
\end{enumerate}
\end{ass}
 Condition~\eqref{eq:ass-1} requires that the value of each slice is proportional to its total resource consumption in each slot and stay duration, and the factor of the proportionality is uncertain but within bound $[L, U]$. Condition~\eqref{eq:ass-2} means that the resource allocation over different resources has a maximum variation of $V$. Condition~\eqref{eq:ass-3} assumes that the consumption of one slice for any resource is small compared to the capacity, which is reasonable in practice.    
The parameters $L, U, V$ and $K$ 
can be determined by InP using SR templates.
The $\expp$ pricing function is defined as 
$\phi:=\{\phi_{m,t}\}_{m\in\calm,t\in\calt}$ with
\begin{align}
\label{eq:threshold}
\phi_{m,t}(w)
= L\left[\exp\!\left(\frac{\alpha w}{2C_m}\right) - 1\right],
\quad w\in[0,C_m],
\end{align}
where $C_m$ denotes the capacity of resource $m$, and 
$w$ represents the current utilization of resource $m$ at time $t$, i.e.,
$w = w_{m,t}^{(i-1)}$, the total amount of resource $m$ allocated at time $t$
after processing the first $i-1$ SRs. The parameter $\alpha$ is defined as $\alpha = 2\ln\!\left(\frac{\sigma UVK}{L} + 1\right) + 2\ln 2.$
Since \csp\ may only be solved approximately, \expp\ adopts a slightly modified admission rule. In particular, the admission control admits slice $i$ if $v_i \ge \tilde{c}_i / \sigma,$
where $\tilde{c}_i$ is the cost returned by the \mbox{$\sigma$-approximate} solver. This scaling compensates for solver sub-optimality and is required to establish the following worst-case guarantee:

\begin{thm}\label{thm:osac}
Under Assumption~\ref{ass1},
$\opa$ instantiated with $\expp$ is 
$\frac{(\sigma + 1)\alpha}{2\ln 2}$-competitive
for \osac. If \csp is solved optimally
(i.e., $\sigma = 1$), $\opa$ achieves a competitive ratio of $\alpha$.
\end{thm}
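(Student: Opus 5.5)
The proof will use the standard online primal--dual argument for threshold/exponential pricing, as in online knapsack and online packing with exponential prices. Let $w_{m,t}^{(i)}$ be the utilization after processing SR $i$, and let $P^{(i)}:=\sum_{i'\le i} v_{i'}x_{i'}$ be the online value. Define the dual potential
\[
D^{(i)} := \sum_{m,t} C_m\,\phi_{m,t}\bigl(w_{m,t}^{(i)}\bigr) + \sum_{i'\le i} u_{i'},
\]
where $u_{i'} := \max\{0,\,v_{i'} - c^*_{i'}\}$ and $c^*_{i'}$ is the optimal \csp\ cost at the prices seen by $i'$. The plan has three steps:
\begin{itemize}
\item \emph{Weak duality.} Show $\opt(\cali)\le D^{(S)}$.
\item \emph{Online primal--dual inequality.} Show that the increment of $D$ caused by each SR is at most a constant $\beta$ times the increment of $P$.
\item \emph{Feasibility.} Show the online allocation never violates capacity.
\end{itemize}
Together these give $\opt\le \beta\,\alg$. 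The target is $\beta=(\sigma+1)\alpha/(2\ln 2)$, which reduces to $\alpha$ when $\sigma=1$.

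\textbf{Weak duality.} Take any offline-admitted SR $i$ with allocation $\by_i^\ast$. Because prices are monotone in utilization, the final prices dominate the prices seen at $i$'s arrival. By optimality of $c^*_i$ we have $v_i \le c^*_i + u_i \le \sum_{t,m} p^{(i)}_{m,t}y^{\ast t}_{i,m} + u_i$. Summing over the SRs admitted by \opt\ and using the capacity constraint \eqref{eq:sac-capacity} gives $\opt\le \sum_{m,t} C_m\phi_{m,t}(w^{(S)}_{m,t}) + \sum_i u_i$. The $u_i$ terms must be controlled, and this is where the admission rule enters:
\begin{itemize}
\item If $v_i<\tilde c_i/\sigma$, then $v_i< c^*_i$, so $u_i=0$.
\item If $i$ is admitted, $u_i\le v_i$ is charged to the primal directly.
\end{itemize}

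\textbf{Incremental inequality.} For admitted $i$ with allocation $\tilde{\by}_i$, I will bound
\[
C_m\bigl[\phi(w+y)-\phi(w)\bigr] \le \Bigl(\phi(w)+\tfrac{L}{\,\cdot\,}\Bigr)\tfrac{\alpha}{2}\,y\,e^{\alpha y/2C_m}
\]
using convexity of the exponential. Condition \eqref{eq:ass-3} ensures $y\le C_m/(1+\log_2(\cdot))$, which makes $e^{\alpha y/2C_m}$ at most a constant factor; this is exactly where the $\ln 2$ in $\alpha$ is used. The price part then sums to roughly $\tfrac{\alpha}{2}\bigl(\tilde c_i + L\sum_{t,m}\tilde y^t_{i,m}\bigr)$. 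I then apply three bounds:
\begin{itemize}
\item $\tilde c_i\le \sigma v_i$, from the admission rule;
\item $L\sum_{t,m}\tilde y^t_{i,m}\le v_i$, from \eqref{eq:ass-1};
\item $u_i\le v_i$.
\end{itemize}
Combining these gives an increment of at most $\bigl(\tfrac{\alpha}{2\ln 2}\sigma + \tfrac{\alpha}{2\ln2}\bigr)v_i$ after normalizing. I expect the bookkeeping to need some slack, e.g.\ absorbing $u_i$ into the $\alpha/(2\ln 2)\ge1$ factor.

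\textbf{Capacity feasibility.} I will argue that a resource never exceeds capacity. Suppose $w_{m,t}>C_m(1-1/(1+\log_2(\cdot)))$ for some $m,t$ used by $i$. Then
\[
p_{m,t}\ \ge\ L\bigl[e^{\alpha/2\cdot(\ldots)}-1\bigr]\ \ge\ \sigma UVK,
\]
by the choice of $\alpha$. I will combine this with \eqref{eq:ass-2}: every positive $y^t_{i,m}$ is at least $\tfrac1V\sum_{m'}y^t_{i,m'}$. This gives $\tilde c_i/\sigma \ge c_i^*/\sigma$, and the bound must then show this exceeds $U T_i\sum_{m'} y^t_{i,m'}\ge v_i$. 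This is a subtle point: $c_i^*$ is evaluated at the optimal allocation, which may avoid resource $m$. So I instead argue directly on $\tilde c_i$: a near-full resource makes $\tilde c_i\ge \sigma UVK\cdot y/1$ exceed $\sigma v_i$, so the SR is rejected. I expect this step and the constant-matching in the incremental inequality, particularly reconciling the $\sigma$ scaling between $\tilde c_i$ and $c^*_i$ and extracting exactly $(\sigma+1)\alpha/(2\ln2)$, to be the main obstacle.
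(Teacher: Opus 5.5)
Your architecture is the paper's online primal--dual argument, and your weak-duality and feasibility steps are sound. Your feasibility step is in fact more careful than the paper's. The paper only treats a resource that has already reached $C_m$, whereas your threshold $w_{m,t}>C_m\bigl(1-1/(1+\log_2(\sigma UVK/L+1))\bigr)$, which forces $\phi_{m,t}(w_{m,t})>\sigma UVK$, is what actually prevents overshoot.

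The genuine gap is in the incremental inequality. If you bound $u_i\le v_i$ for admitted SRs, your three bounds give at best $\bigl(\tfrac{(\sigma+1)\alpha}{2\ln 2}+1\bigr)v_i$. Nothing else has slack to absorb the extra $v_i$. It can only be cancelled through the coupling between $u_i$ and $\tilde c_i$ via $c_i^*$, and the bound $u_i\le v_i$ throws that coupling away. The missing step, which is the heart of the paper's chain of inequalities, is that an admitted SR satisfies $u_i\le v_i-\tilde c_i/\sigma$. If $v_i\ge c_i^*$, this follows from $c_i^*\ge\tilde c_i/\sigma$; otherwise $u_i=0\le v_i-\tilde c_i/\sigma$ by the admission rule. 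The increment is then at most $\bigl(\tfrac{\alpha}{2\ln 2}-\tfrac1\sigma\bigr)\tilde c_i+\tfrac{\alpha}{2\ln 2}L\sum_{t,m}\tilde y^t_{i,m}+v_i$. Since $\tfrac{\alpha}{2\ln 2}=1+\log_2(\sigma UVK/L+1)\ge\tfrac1\sigma$, the bounds $\tilde c_i\le\sigma v_i$ and $L\sum_{t,m}\tilde y^t_{i,m}\le v_i$ give exactly $\tfrac{(\sigma+1)\alpha}{2\ln 2}v_i$.

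Two constants also need fixing. First, the convexity bound $e^x-1\le xe^x$ you propose only yields the factor $e^{\alpha y/2C_m}<2$. That leads to $(\sigma+1)\alpha$ even after the fix above. To get $1/\ln 2$, use the chord bound $e^x-1\le x/\ln 2$ for $0\le x\le\ln 2$, valid because $(e^x-1)/x$ is increasing. Apply it with $x=\alpha\tilde y^t_{i,m}/(2C_m)<\ln 2$, which is exactly what \eqref{eq:ass-3} guarantees.

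Second, $\tfrac{(\sigma+1)\alpha}{2\ln 2}$ at $\sigma=1$ equals $\alpha/\ln 2$, not $\alpha$. The paper's chain replaces $e^x-1$ by its first-order approximation $x$ and ends at $\tfrac{(\sigma+1)\alpha}{2}$; that is where the ratio $\alpha$ for $\sigma=1$ comes from. A rigorous version of your argument delivers only $\alpha/\ln 2$ in that case, so it does not establish the theorem's second claim.
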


\noindent \textbf{Proof of Theorem~\ref{thm:osac}.}
We analyze the competitive performance of \opa based on the online primal-dual analysis approach~\cite{buchbinder2009design}.
By partially relaxing the capacity constraint~\eqref{eq:sac-capacity} using the dual variable $\boldsymbol{\lambda}:=\{\lambda_{m,t}\}_{m\in\calm,t\in\calt}$, the dual of problem~\eqref{p:sac} is
\begin{align*}
\min_{\boldsymbol{\lambda} \ge 0}\max_{\substack{x_i \in \{0,1\},\\ \by_{i}\in\caly_{i}(x_i)}} \quad \sum_{i\in\cals} v_i x_i + \sum_{t\in\calt}\sum_{m\in\calm} \lambda_{m,t} [C_m -  \sum_{i\in\cals} y_{i,m}^t].
\end{align*}
Equivalently, the dual problem can be presented as:
\begin{align*}
\min_{\boldsymbol{\lambda}\ge 0} \quad&  \sum_{t\in\calt}\sum_{m\in\calm} \lambda_{m,t} C_m + \sum_{i\in\cals} \hat{x}_i\left[v_i - \sum_{t\in\calt}\sum_{m\in\calm} \hat{y}_{i,m}^t \lambda_{m,t} \right], 
\end{align*}
where for a given $\boldsymbol{\lambda}$, $\hat{\by}_i:= \hat{\by}_i(\boldsymbol{\lambda}) = \{\hat{y}_{i,m}^t\}_{m\in\calm, t\in\calt}$ is the optimal solution of 
$$\min_{\by_i \in \caly_i(1)} \sum\nolimits_{t\in\calt_i}\sum\nolimits_{m\in\calm} \lambda_{m,t} y_{i,m}^t,$$
and $\hat{\bx}:=\{\hat{x}_i\}_{i\in\cals}$ is the optimal solution of 
$$\max_{x_i\in\{0,1\}} \sum_{i\in\cals} {x}_i\left[v_i - \sum_{t\in\calt}\sum_{m\in\calm} \hat{y}_{i,m}^t \lambda_{m,t} \right].$$ 
Therefore, $\hat{\by}_i$ is the optimal solution of $\csp(\boldsymbol{\lambda};\boldsymbol{\xi}_i,\calt_i)$, and
$\hat{x}_i = 1$ if $v_i \ge \sum_{t\in\calt}\sum_{m\in\calm} \hat{y}_{i,m}^t \lambda_{m,t}$ and $\hat{x}_i = 0$ otherwise.

The high-level idea of online primal-dual analysis is to construct a feasible dual solution of problem~\eqref{p:sac} based on the solution from the online algorithm \opa. Let $\texttt{Dual}(\cali)$ denote the dual objective evaluated at the feasible solution, then \opa is $\alpha$-competitive if the following inequality holds: 
\begin{align}\label{eq:opd}
    \alpha \cdot \alg(\cali) \ge  \texttt{Dual}(\cali) \ge  \opt(\cali). 
\end{align}

First, we show the second inequality, which holds based on weak duality when the solution $\{\bar{x}_i, \bar{\by}_i\}_{i\in\cals}$ of \opa is primal feasible and the constructed dual solution is dual feasible.
We note that the online decision of \opa satisfies constraint~\eqref{eq:sac-qos} directly. 
Therefore, we just need to show that no resource capacity constraints can be violated by the online decision of \opa. To see this, suppose resource $m'$ reaches the capacity, then for any follow-up slice that uses $m'$, the scaled estimated cost of admitting this slice is at least
\begin{align*}
\Tilde{c}_i/\sigma &\ge 
   \tilde{y}_{i,m'}^t \phi_{m',t}(C_{m'})/\sigma\\ &= \tilde{y}_{i,m'}^t VUK
   \ge \sum\nolimits_{m\in\calm}y_{i,m}^t U T_i \ge v_i, 
\end{align*}
where the first and the second inequalities hold due to conditions~\eqref{eq:ass-1} and~\eqref{eq:ass-2} in Assumption~\ref{ass1}.
Therefore, the follow-up slice will not be admitted by \opa and no capacity constraints will be violated.

Given the online decision of \opa, we construct a solution of the dual problem as: 
\begin{align}
    \bar{\lambda}_{m,t} = \phi_{m,t}(w^{(S)}_{m,t}), \forall m\in\calm, t\in\calt,
\end{align}
where $w^{(S)}_{m,t} = \sum_{i\in\cals} \bar{y}_{i,m}^t$ is the final utilization of resource $m$ at time $t$ when running \opa.
Clearly, $\bar{\lambda}_{m,t} \ge 0, \forall m\in\calm, t\in\calt$ and thus the dual solution is feasible.

Next, we show the first inequality in~\eqref{eq:opd}. Let $P_i$ and $D_i$ denote the primal and dual objective after processing the $i$-th slice using \opa.
The increment of the primal objective is 
\begin{align*}
    P_i - P_{i-1} = v_i \bar{x}_i,
\end{align*}
and the increment of the dual objective is 
\begin{align*}
    D_i - D_{i-1} &= \sum_{t\in\calt}\sum_{m\in\calm} [\phi_{m,t}(w^{(i)}_{m,t}) - \phi_{m,t}(w^{(i-1)}_{m,t})]C_m   \\
    &\quad\quad+\hat{x}_i [v_i - \sum_{t\in\calt}\sum_{m\in\calm} \hat{y}_{i,m}^t \phi_{m,t}(w^{(S)}_{m,t}) ].
\end{align*}

To relate the dual increment $D_i - D_{i-1}$ with online decision $\bar{x}_i$ and $ \bar{\by}_i$ of \opa, note that
\begin{subequations}
\begin{align}
  \hat{x}_i[v_i - \sum_{t\in\calt}&\sum_{m\in\calm} \hat{y}_{i,m}^t \phi(w^{(S)}_{m,t}) ] \\
  &\le \hat{x}_i [v_i - \sum_{t\in\calt}\sum_{m\in\calm} \hat{y}_{i,m}^t \phi(w^{(i-1)}_{m,t}) ]  \\
  &\le \hat{x}_i [v_i - \frac{1}{\sigma}\sum_{t\in\calt}\sum_{m\in\calm} \bar{y}_{i,m}^t \phi(w^{(i-1)}_{m,t}) ]\\
  &\le \bar{x}_i [v_i - \frac{1}{\sigma}\sum_{t\in\calt}\sum_{m\in\calm} \bar{y}_{i,m}^t \phi(w^{(i-1)}_{m,t}) ],
\end{align}    
\end{subequations}
where the first inequality holds since $\phi_{m,t}(\cdot)$ is a non-decreas-ing function.
The second inequality holds because $\hat{\by}_i$ and $\bar{\by}_i$ are, respectively, the feasible solution and $\sigma$-approximate solution of  $\csp(\bp^{(i)};\boldsymbol{\xi}_i,\calt_i)$. Therefore, we have $$\sum_{t\in\calt}\sum_{m\in\calm} \hat{y}_{i,m}^t \phi(w^{(i-1)}_{m,t}) \ge c_i^* \ge \frac{1}{\sigma}\sum_{t\in\calt}\sum_{m\in\calm} \bar{y}_{i,m}^t \phi(w^{(i-1)}_{m,t}).$$ 
The last inequality holds because $\hat{x}_i = 1$ must give $\bar{x}_i = 1$.
Next we can consider the following two cases.

\noindent\textbf{Case I.} When $\bar{x}_i = 0$, we have $\bar{y}_{i,m}^t = 0, \forall m\in\calm, t\in\calt$,
and thus $D_i - D_{i-1} \le 0 = P_{i} - P_{i-1}. $

\noindent\textbf{Case II.} When $\bar{x}_i = 1$, we have $\bar{\by}_i = \tilde{\by}_i$, and the admission of slice $i$ gives
$v_i \ge \frac{1}{\sigma}\sum_{t\in\calt_i}\sum_{m\in\calm} \bar{y}_{i,m}^t \phi_{m,t}(w^{(i-1)}_{m,t})$. Thus,
\begin{subequations}
\begin{align}
    D_i - D_{i-1} &\le \sum_{t\in\calt}\sum_{m\in\calm} [\phi_{m,t}(w^{(i)}_{m,t}) - \phi_{m,t}(w^{(i-1)}_{m,t})]C_m \nonumber\\
    &\quad\quad+  v_i - \frac{1}{\sigma}\sum_{t\in\calt}\sum_{m\in\calm} \bar{y}_{i,m}^t \phi_{m,t}(w^{(i-1)}_{m,t}) \\
    &\approx (\frac{\alpha}{2} - \frac{1}{\sigma})\sum_{t\in\calt_i}\sum_{m\in\calm}\bar{y}_{i,m}^t\phi_{m,t}(w_{m,t}^{(i-1)}) \nonumber\\
    \label{eq:ineq1}
    &\quad\quad\quad + \frac{\alpha}{2} \sum_{t\in\calt_i}\sum_{m\in\calm}\bar{y}_{i,m}^t L + v_i \\
    \label{eq:ineq2}
    &\le \frac{(\sigma+1)\alpha}{2} v_i  = \frac{(\sigma+1)\alpha}{2} (P_i - P_{i-1}).
\end{align}    
\end{subequations}
The equality~\eqref{eq:ineq1} holds since
\begin{subequations}
\begin{align*}
&\sum_{t\in\calt}\sum_{m\in\calm} [\phi_{m,t}(w^{(i)}_{m,t}) - \phi_{m,t}(w^{(i-1)}_{m,t})]C_m \\
&= C_m\sum_{t\in\calt_i}\sum_{m\in\calm} L \exp(\frac{\alpha w_{m,t}^{(i-1)}}{2 C_m}) \cdot [\exp(\frac{\alpha \bar{y}_{i,m}^t}{2 C_m} ) - 1] \\
&\approx C_m\sum_{t\in\calt_i}\sum_{m\in\calm} L \exp(\frac{\alpha w_{m,t}^{(i-1)}}{2 C_m} ) \cdot \frac{\alpha \bar{y}_{i,m}^t}{2 C_m} \\
& = \frac{\alpha}{2} \sum_{t\in\calt_i}\sum_{m\in\calm}\bar{y}_{i,m}^t[L \exp(\frac{\alpha w_{m,t}^{(i-1)}}{2 C_m} ) - L ] \nonumber\\
&\quad\quad\quad+  \frac{\alpha}{2}\sum_{t\in\calt_i}\sum_{m\in\calm}\bar{y}_{i,m}^t L,\\
&= \frac{\alpha}{2} \sum_{t\in\calt_i}\sum_{m\in\calm}\bar{y}_{i,m}^t\phi_{m,t}(w_{m,t}^{(i-1)}) +  \frac{\alpha}{2} \sum_{t\in\calt_i}\sum_{m\in\calm}\bar{y}_{i,m}^t L. \nonumber
\end{align*}
\end{subequations}

The inequality~\eqref{eq:ineq2} holds because: 
\begin{enumerate}[label=(\roman*)]
    \item $v_i \ge \frac{1}{\sigma}\sum_{t\in\calt_i}\sum_{m\in\calm} \bar{y}_{i,m}^t \phi(w^{(i-1)}_{m,t})$ from the decision rule in the online algorithm, and
    \item $\sum_{t\in\calt_i}\sum_{m\in\calm}  \bar{y}_{i,m}^t L \le \sum_{t\in\calt_i} \frac{v_i}{T_i} \le v_i$ from condition~\eqref{eq:ass-1} in Assumption~\ref{ass1}.
\end{enumerate}

Thus, we have
\begin{align*}
    \texttt{Dual}(\cali) = D_S &= \sum\nolimits_{i\in\cals} [D_i - D_{i-1}] \\
    &\le \sum\nolimits_{i\in\cals} \frac{(\sigma+1)\alpha}{2} [P_i - P_{i-1}] \\
    &= \frac{(\sigma+1)\alpha}{2} P_S = \frac{(\sigma+1)\alpha}{2} \alg(\cali),
\end{align*}
which completes the proof.

\smallskip
\noindent\textbf{EXP-Adaptive.}
Although \expp provides performance guarantees under worst-case slice arrivals, its pricing strategy can be overly conservative because it is designed to maintain feasible admission and resource allocation decisions in extreme scenarios that rarely occur in practice.
In practical deployments, statistical information about SR arrivals and resource demands may be available from historical data or prior observations. This creates an opportunity to improve \expp using a data-driven approach.

Motivated by the exponential pricing form of \expp, we learn an exponential pricing function from empirical data. Specifically, we assume the pricing function has a parametric form $\phi_{m,t;L\alpha}$ defined in ~\eqref{eq:threshold}, where $L$ and $\alpha$ are learnable parameters. In EXP-Adaptive, we select $(L,\alpha)$ from a finite candidate set $\mathcal{L} \times \mathcal{A}$, where each pair defines a distinct pricing rule. We evaluate each candidate parameter pair $(L,\alpha)\in\mathcal L\times\mathcal A$ over $N$ simulated instances and select the pair that maximizes the cumulative admitted value:
$[
(\hat L,\hat\alpha)
\in
\arg\max_{(L,\alpha)\in\mathcal L\times\mathcal A}
\sum_{n=1}^{N}u_n(L,\alpha)
]$.

Importantly, EXP-Adaptive does \emph{not} assume access to a \mbox{$\sigma$-approximate} solution for CSP. It treats the underlying resource allocation algorithm as a black box and adapts pricing to its empirical behavior.

\section{Evaluation}\label{sec:results}

\subsection{Testbed} \label{sec:testbed}

We deploy a Kubernetes-native, end-to-end 5G testbed that integrates an OpenAirInterface (OAI) RAN with an Open5GS core  \cite{oai_ran,open5gs}. The gNB is disaggregated at the F1 interface, with the Distributed Unit (DU) and Centralized Unit (CU) instantiated as separate containers. OAI NR UEs connect to the gNB through RFSIMULATOR over an additive white Gaussian noise (AWGN) channel. The testbed runs on an AMD EPYC~9254 server with 44 allocatable vCPUs and 32 GB of RAM. To minimize scheduling interference and ensure reproducible resource measurements, the Kubernetes static CPU manager pins VNFs $f_0-f_3$ to dedicated CPU cores. The OAI cell operates in band~n78 using 106 physical resource blocks (PRBs), a 30 kHz subcarrier spacing, and time-division duplexing. We replay representative eMBB, URLLC, and mMTC traffic traces from the Kestrel dataset \cite{kestrel}. During each experiment, we collect CPU, memory and PRB utilization, modulation and coding scheme (MCS), and traffic at the F1 and N3 interfaces. 

\begin{figure}
    \centering
    \includegraphics[width=0.8\linewidth,trim=0cm 0cm 0cm 0.0cm, clip]{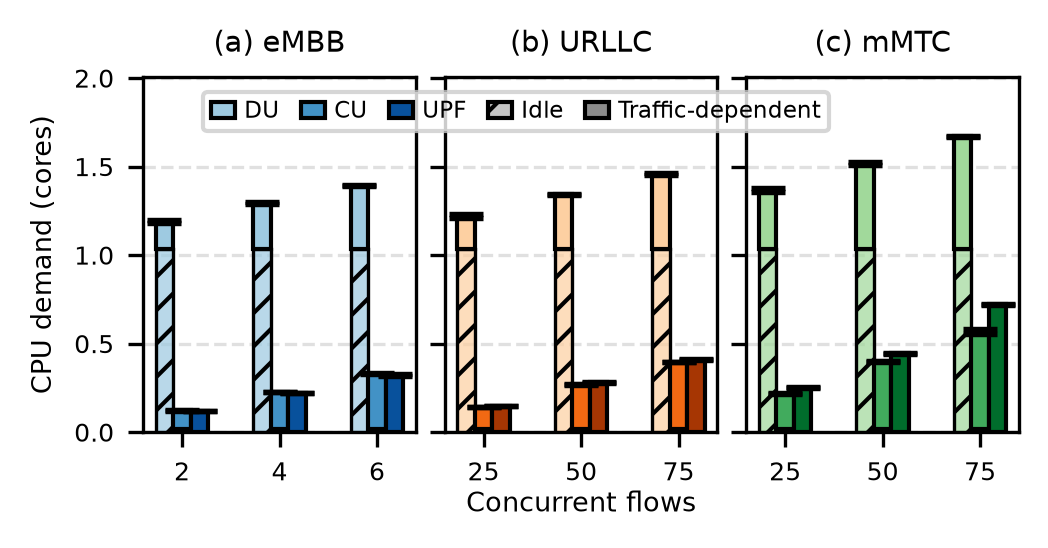}
    \caption{VNF resource usage profile using Kestrel \cite{kestrel} dataset.}
    \label{fig:profiling}
\end{figure}

\begin{figure}
    \centering
    \includegraphics[width=0.85\linewidth,trim=0.75cm 0.35cm 0cm 0.25cm, clip]{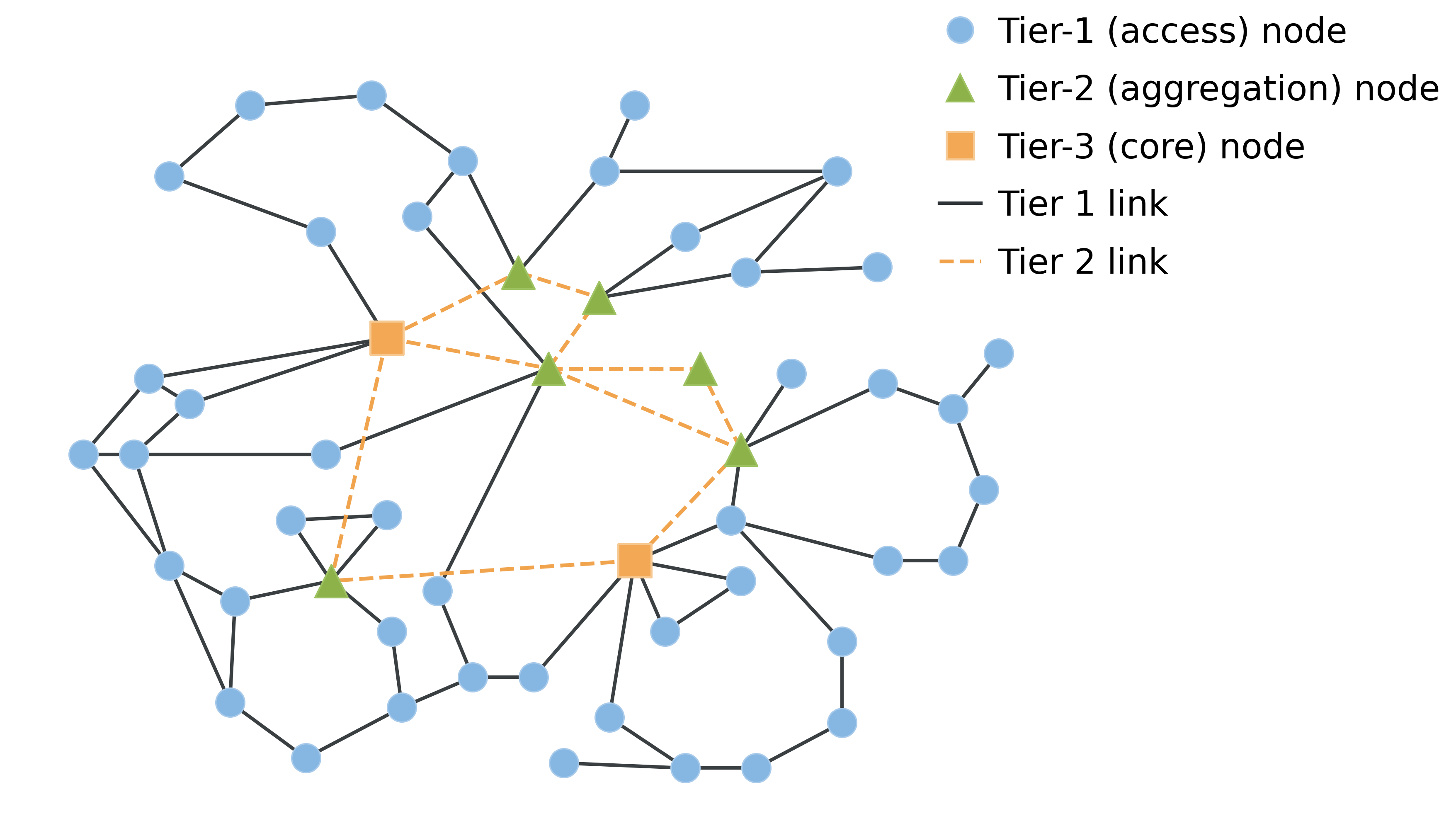}
    \caption{Telecom Italia metro-regional network \cite{askari2019latency}.}
    \label{fig:topos}
\end{figure}

Figure~\ref{fig:profiling} shows that VNF compute utilization increases linearly with the number of concurrent flows ($k$) and varies substantially across traffic classes, consistent with \cite{Slicepilot}. At the highest operating points, eMBB delivers 159.1\,Mbps and consumes 2.05 cumulative CPU cores, while URLLC delivers 177.6\,Mbps and consumes 2.26 cores. In contrast, mMTC consumes 2.96 cores despite delivering only 25.1\,Mbps. This difference is explained by its substantially higher packet rate. These measurements motivate class- and VNF-specific compute models rather than a fixed CPU requirement per slice.

\subsection{Simulation Setup}  \label{sec:sim_setup}

\begin{figure*}[ht!]
    \centering
    
    \begin{subfigure}[t]{0.3\textwidth}
        \centering
        \includegraphics[width=\linewidth]{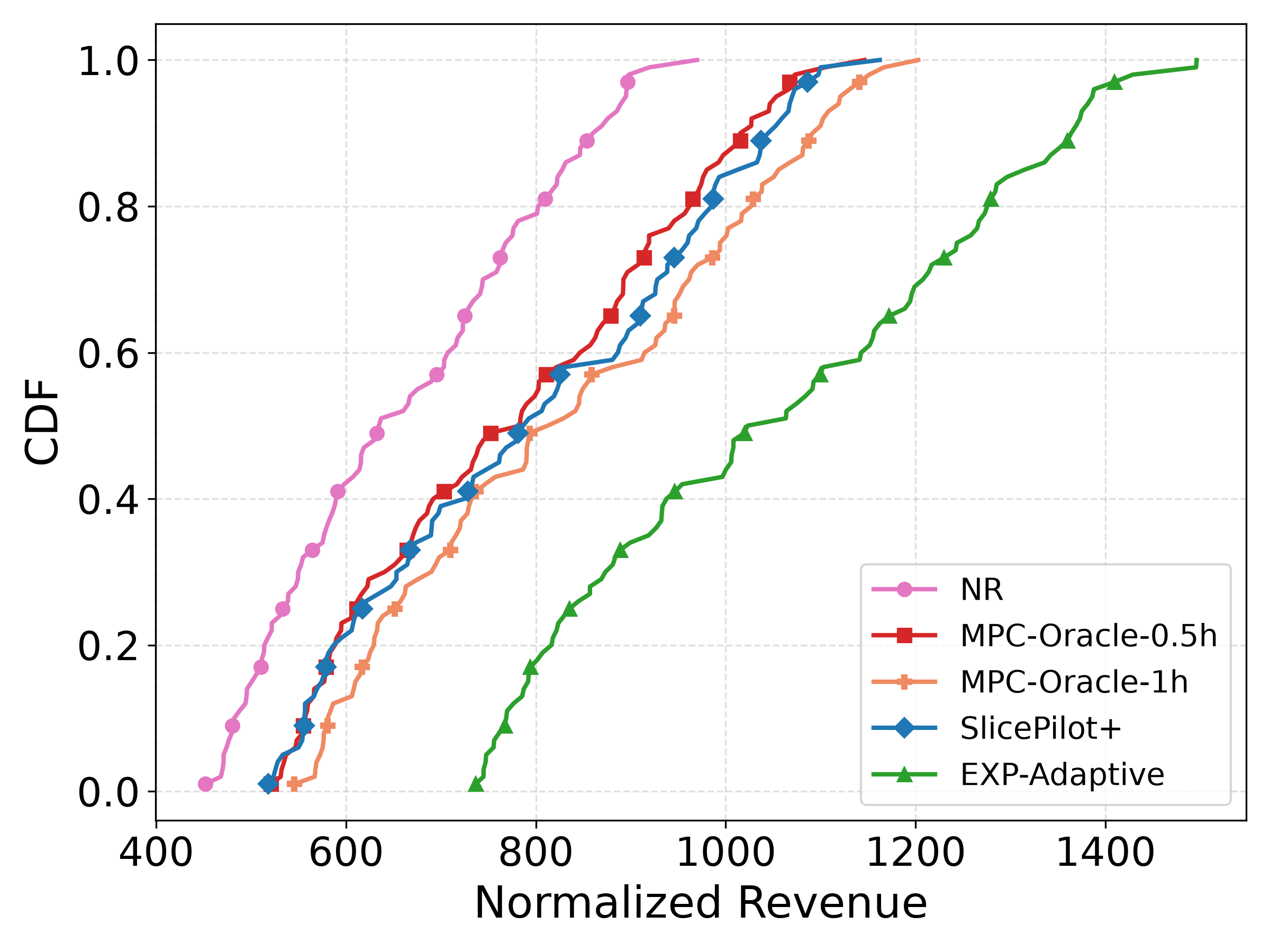}
        \caption{Total admitted norm. revenue}
        \label{fig:revenue}
    \end{subfigure}
    \hfill
    \begin{subfigure}[t]{0.3\textwidth}
        \centering
        \includegraphics[width=\linewidth]{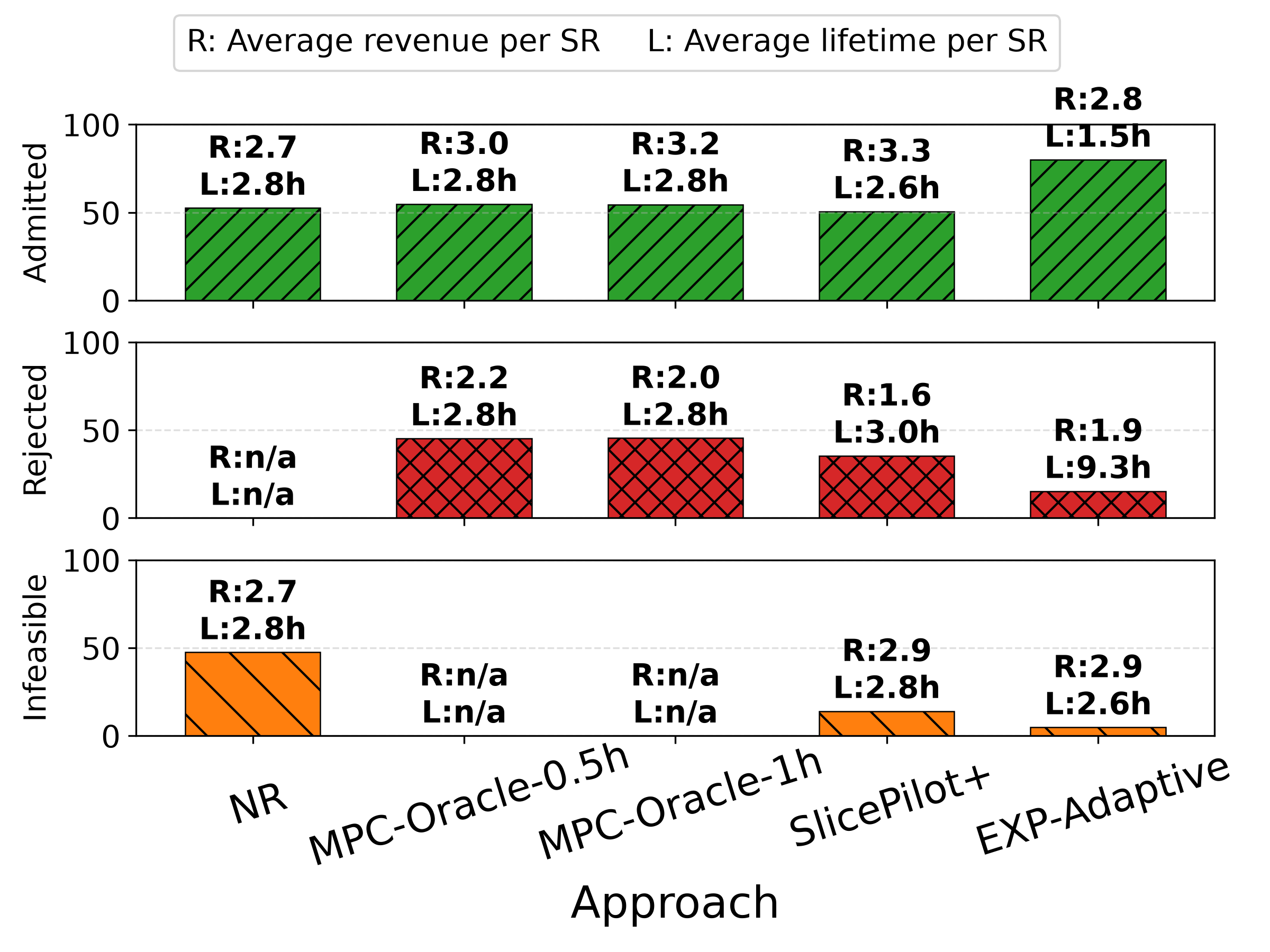}
        \caption{Admission outcomes}
        \label{fig:partial_info_admission_outcomes}
    \end{subfigure}
    \hfill
    \begin{subfigure}[t]{0.3\textwidth}
        \centering
        \includegraphics[width=\linewidth]{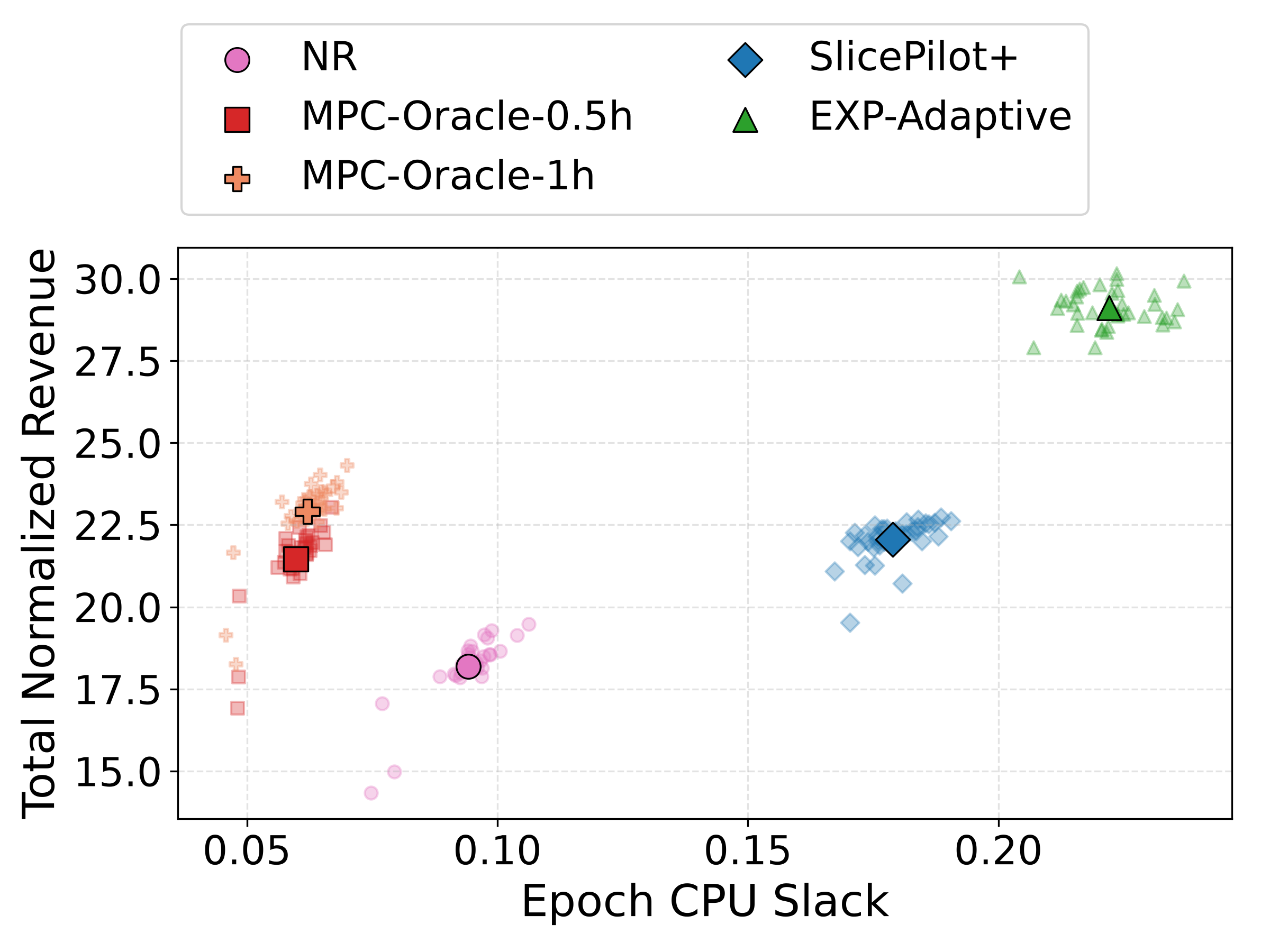}
        \caption{Revenue-capacity tradeoff}
        \label{fig:slack}
    \end{subfigure}

    \caption{Admission control and resource allocation statistics.}
    \label{fig:partial_info_stats}
\end{figure*}

We simulate OSARA for 48 hours over 100 runs, excluding the first 12 hours from the results to remove transient effects caused by the initially empty network. The following simulation parameters are used:\\
\noindent\textbf{RAN Configuration.} Each access node models a three-sector NR macro site, where each sector operates a 100 MHz $8\times8$ MIMO cell with 273 PRBs. Although PRB constraints are enforced, they remain non-binding under the evaluated workload and substrate configuration, consistent with prior work~\cite{Slicepilot,TNSM_GNN,TNET-1}. PRB-limited settings can be supported without modifying the proposed approach.\\
\noindent\textbf{Slice Requests.} We consider eMBB, URLLC, and mMTC slices, denoted by $u\in\mathcal{U}=\{1,2,3\}$, with $k\in\{10,25,50\}$, $\{50,100\}$, and $\{50,100\}$ concurrent flows, respectively. Unlike prior work which uses fixed resource requirements~\cite{Slicepilot,TNSM_GNN,TNET-1,orange_paper}, we derive VNF resource demands from testbed measurements. eMBB requirements beyond our testbed range are estimated using fitted models. The MEC applications are Varnish, LiveKit SFU, and EMQX Neuron, representing content caching, real-time video conferencing, and industrial IoT gateway, respectively. Their resource reservations follow official deployment charts ~\cite{varnish_helm_values,
varnish_intel_benchmark,varnish_cache_sizing}. The VNF delay budgets are $[0.25,2,6,10,20]$, $[0.25,2,2,2,10]$, and $[0.25,2,6,30,100]$\,ms for eMBB, URLLC and mMTC, respectively, based on ~\cite{desset2012flexible, wang2017centralize}.

\noindent\textbf{SR Traffic Model.} SR arrivals and lifetimes are derived from Microsoft Azure VM traces~\cite{azure_traces}. At the start of each run, every access node is assigned a target rate sampled from $U[1,3]$ SRs/hour, creating sufficient contention for admission and placement decisions to affect performance~\cite{orange_paper}. We then select an independent 48-hour trace window per node from the Azure VM traces, preserve relative event timing, and subsample arrivals to match the target rate. Lifetimes are clipped to $[1,12]$ hours, providing a mean of 2.78\,h with approximately $10\%$ lasting 12\,h. SR values follow a truncated Zipf distribution over $[1,10]$, with shape parameter sampled once per run from $U[0.1,2]$, providing a more flexible alternative to the binomial models used in prior work~\cite{noms22,TNSM_GNN}. Slice types are generated with equal probabilities.\\
\noindent\textbf{Substrate Network.} For the substrate network, we consider the Telecom Italia metro-regional network with 52 nodes \cite{askari2019latency}, shown in \fig{fig:topos}. The network consists of 44 access nodes, 6 aggregation nodes, and 2 core nodes. The node resource capacities are set using the model in \cite{Slicepilot}. Tier~1 and Tier~2 links are assigned transmission delays of 1.8 ms and 4.8 ms, and bandwidth capacities of 10 Gbps and 100 Gbps, respectively~\cite{NGMN_topo}.

\subsection{Comparative Approaches}
As discussed in Section~\ref{sec:bg_related}, data-driven approaches to \osac generally fall into two broad paradigms: Deep Reinforcement Learning and Model Predictive Control.

\noindent \textbf{SlicePilot+ (MARL).}\label{sec:marl_nr} We implement the MARL approach from~\cite{Slicepilot}, with a separate agent for each slice type, and extend it with explicit admission control. For each arriving SR, the corresponding agent sequentially places the VNFs, while virtual links are routed along feasible shortest-delay paths. As suggested in \cite{TNSM_GNN} for large topologies, a GATv2~\cite{gatv2} encoder processes the substrate graph, and its embeddings are combined with the slice and partial-placement features to output one Q-value per substrate node and an additional admission action. Infeasible actions are excluded through action masking. The reward combines normalized admitted revenue with penalties for infeasible placements.

\noindent \textbf{Model Predictive Control (\mpc-Oracle).} Several works address slice admission control and resource allocation through a yield-driven overbooking framework~\cite{vs1, vs2, kansaas}. To capture this, we implement a model predictive control policy, denoted \textit{MPC-Oracle}, which partitions time into epochs. At the start of each epoch, the controller constructs a forecast of slice requests expected to arrive over a finite prediction horizon and solves a revenue-maximization problem subject to the residual node and link capacities at that time. The decisions for the current epoch are implemented, after which the horizon recedes and the optimization is repeated at the next epoch with updated system state. We denote the resulting baselines according to the optimization-epoch duration, \ie MPC-Oracle-0.5h and MPC-Oracle-1h, corresponding to optimization epochs of length 0.5h and 1h, respectively. To isolate the effect of receding-horizon optimization from forecasting errors, we assume perfect knowledge of future slice arrivals.

\noindent \textbf{Node-Ranking (\nr).} \label{sec:node-ranking}
A greedy heuristic for isolation-aware RAN slicing with delay constraints~\cite{yu2020isolation}. \nr sequentially embeds VNFs by selecting the highest-ranked feasible substrate node based on a weighted residual-resource and delay score.

\subsection{Performance Comparison}

\noindent\textbf{Revenue.}
Figure~\ref{fig:revenue} reports the revenue achieved by the different
approaches. With the greedy heuristic \nr\ as a baseline, SlicePilot+,
\mpc-Oracle-0.5h, \mpc-Oracle-1h, and \expp-Adaptive increase mean revenue by
$20.6\%$, $17.9\%$, $25.9\%$, and $59.4\%$, respectively. Compared with the
state-of-the-art baselines, SlicePilot+ and \mpc-Oracle-1h, \expp-Adaptive consistently
achieves higher revenue across the observed distribution, with mean
improvements of $32.2\%$ and $26.7\%$, respectively.

\noindent\textbf{Admission Control.} For a deeper understanding of the different approaches, we analyze their
admission outcomes in Fig.~\ref{fig:partial_info_admission_outcomes}. The
greedy \nr\ policy performs no explicit rejection, resulting in $52.59\%$
admitted and $47.41\%$ infeasible SRs. Admitted and infeasible requests have
nearly identical average revenues ($2.65$ and $2.66$) and lifetimes
($2.77$ and $2.78$\,h), confirming that \nr\ does not distinguish requests
by either value or duration.

Both MPC-Oracle variants achieve zero infeasible outcomes through joint batch optimization and admit approximately $55\%$ of SRs. \mpc-Oracle-1h admits higher-value SRs on average ($3.17$ versus $2.97$) and rejects lower-value SRs ($1.98$ versus $2.23$), as its longer epoch provides a larger candidate batch. However, both optimize only within the current epoch and ignore resource occupation beyond that horizon. Hence, their admitted and rejected lifetimes remain close to the overall mean of $2.78$ h, similar to \nr. SlicePilot+ exhibits the strongest value separation with admitted SRs having the highest average revenue ($3.31$), while rejected SRs having the lowest ($1.60$). However, it shows limited lifetime selectivity, reflecting a long-horizon credit-assignment problem, \ie the impact of long-lived SRs appears only later through reduced capacity for future arrivals. 

Finally, \expp-Adaptive achieves the strongest joint lifetime and revenue selectivity. It admits $79.96\%$ of SRs, favoring shorter-lived requests with higher offered revenue ($1.55$ h and $2.78$ on average), while rejecting substantially longer-lived requests with lower revenue ($9.33$ h and $1.93$). This shows that adaptive pricing captures both immediate value and the opportunity cost of prolonged resource occupation. The remaining infeasible SRs are relatively valuable ($2.92$ on average), suggesting that they are rejected by capacity limitations rather than by deliberate deprioritization. This selectivity preserves resources for future arrivals and yields the highest aggregate revenue.

\noindent\textbf{Intertemporal Revenue-Capacity Trade-off.} Figure~\ref{fig:slack} relates CPU slack at the beginning of each epoch to the normalized revenue earned during that epoch. CPU slack denotes the fraction of aggregate substrate CPU capacity that remains unallocated at epoch start. We exclude the first 10 epochs to remove initialization transient effects. As expected, both \nr\ and the MPC-Oracle variants maintain little slack due to their myopic per-request and  per-epoch optimization, respectively. However, low slack does not necessarily yield high long-term revenue, as aggressively consuming capacity can reduce flexibility for future high-value arrivals. \nr\ achieves the lowest normalized epoch revenue, reflecting inefficient, value-agnostic resource consumption. MPC-Oracle packs requests more effectively through joint optimization and therefore earns higher revenue. The 1-hour variant benefits from a larger candidate batch and improves selection within the epoch, but it still ignores opportunity costs beyond that horizon. Consequently, both MPC variants operate near saturation and preserve little capacity for subsequent epochs.

SlicePilot+ and \expp-Adaptive maintain substantially more slack, preserving capacity for valuable future SRs. However, SlicePilot+ does not consistently convert this headroom into additional revenue. Consequently, it achieves normalized epoch revenue comparable to the MPC-Oracle variants. In contrast, \expp-Adaptive attains both the highest slack and the highest normalized epoch revenue, showing that utilization-dependent pricing preserves capacity when needed, effectively utilizing it when profitable SRs arrive.

\begin{figure}[t]
    \centering
    \includegraphics[width=0.7\linewidth]
    {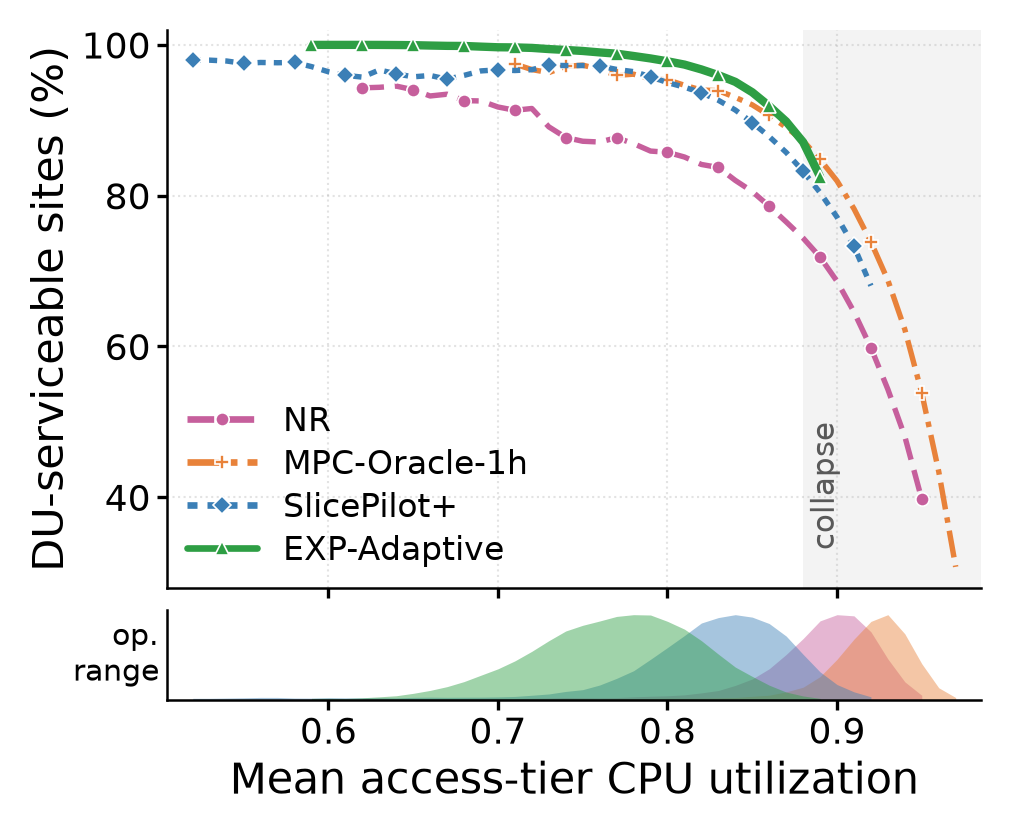}
    \caption{DU-serviceability vs. mean access-tier CPU utilization.}
    \label{fig:frag}
\end{figure}

\noindent\textbf{Resource Allocation.} In \osac, where admission control and resource allocation are jointly optimized, aggregate utilization alone does not indicate resource allocation quality. A method may appear underutilized because it selectively preserves capacity for future high-value SRs. Figure~\ref{fig:frag} therefore relates DU serviceability to mean access-tier CPU utilization, while the lower panel shows each method’s empirical operating range. We focus on the DU because it is the first placeable VNF after the fixed RU and typically faces the tightest latency constraint. An access site is DU-serviceable if a standardized DU can be placed locally or at a delay-reachable node while satisfying residual resource constraints.

The greedy \nr\ policy loses serviceability earliest, indicating that its local decisions fragment residual compute and bandwidth. In contrast, the state-of-the-art baselines, SlicePilot+ and \mpc-Oracle-1h, together with \expp-Adaptive, preserve substantially more DU-serviceable sites, demonstrating the benefit of strategic resource allocation. At very high utilization, \mpc-Oracle-1h preserves the most serviceable sites due to its look-ahead optimization. SlicePilot+ remains close to MPC despite operating fully online. Among the online methods, \expp-Adaptive performs best and largely avoids the collapse region by combining effective placement with admission pricing, thereby preserving feasible embeddings for future SRs.

\noindent\textbf{Interpretability.} Attention-based GNNs (GATv2) can explain allocation through attention scores, but admission decisions remain black-boxed~\cite{TNSM_GNN}. In contrast, \opa bases both on resource prices, which \expp-Adaptive makes utilization-dependent. Figure~\ref{fig:prices} shows that Node (cpu, mem), and link (bandwidth) resource prices increase smoothly with pre-decision utilization, providing interpretable signals of scarcity and opportunity cost. These prices guide the minimum-cost feasible placement, after which its cost is compared with the slice revenue to decide if the SR should be admitted. Each admission decision can therefore be traced from utilization to prices, placement cost, and admission outcome.

\begin{figure}[t]
    \centering
    \begin{subfigure}[t]{0.49\columnwidth}
        \centering
        \includegraphics[width=\linewidth]
        {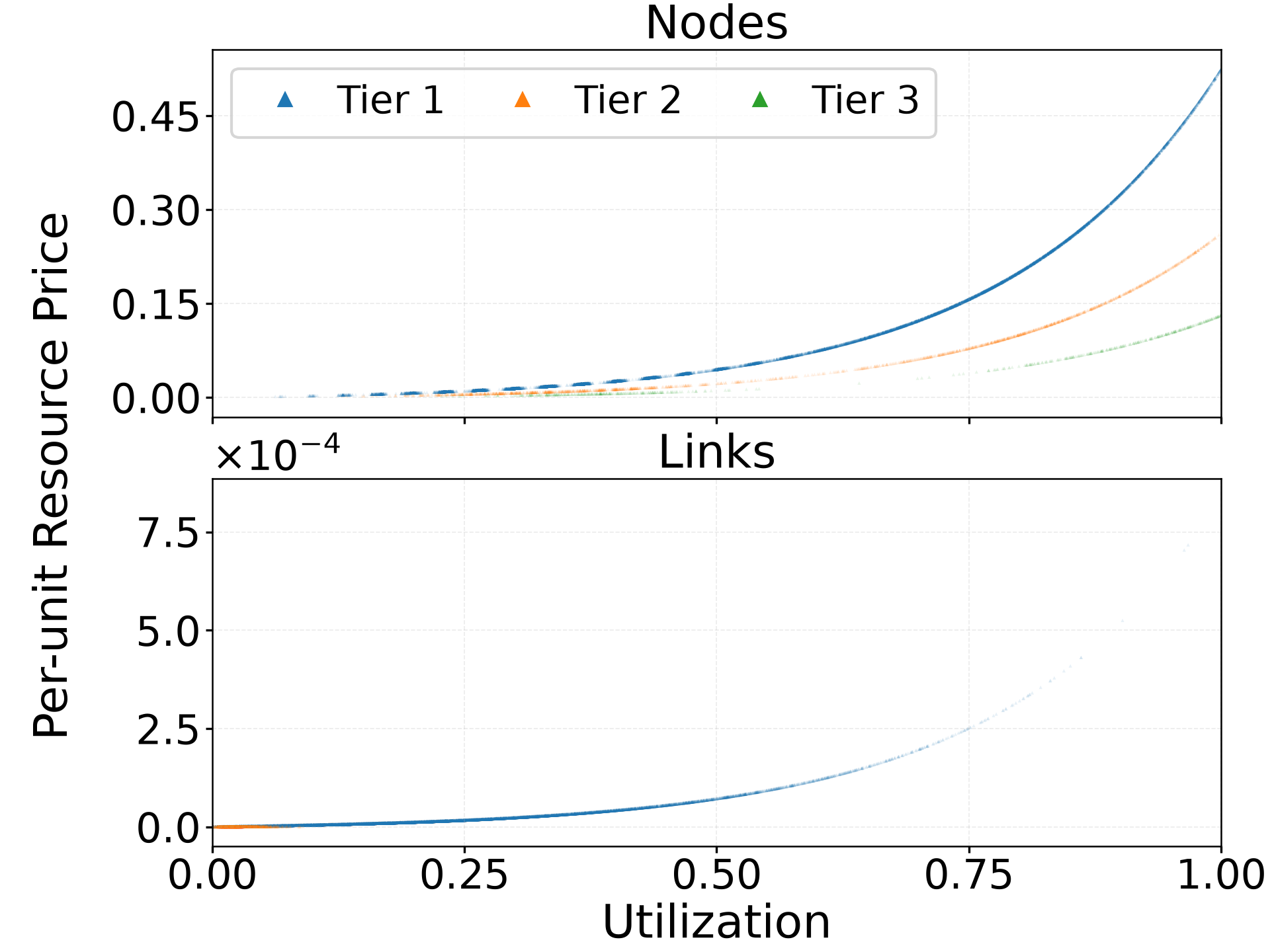}
        \caption{Empirical resource prices}
        \label{fig:prices}
    \end{subfigure}
    \hfill
    \begin{subfigure}[t]{0.49\columnwidth}
        \centering
        \includegraphics[width=\linewidth]
        {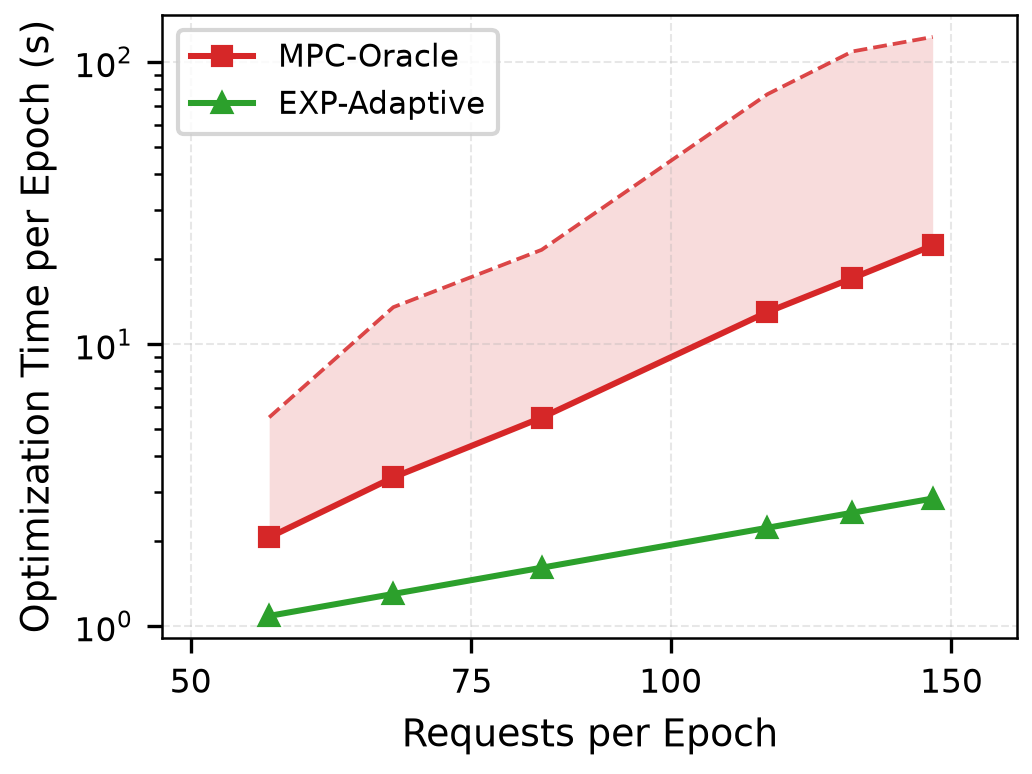}
        \caption{Optimization time}
        \label{fig:batch_size_vs_solve_time}
    \end{subfigure}
    \caption{Interpretability and computational complexity plots.}
    \label{fig:interpretability_complexity}
\end{figure}

\noindent\textbf{Computational Complexity.} For DRL approaches, online cost is dominated by neural-network inference, which grows with substrate graph size. Despite substantial training overhead, inference cost is typically comparable to heuristic methods~\cite{gatv2,TNSM_GNN}. In \opa, each SR independently solves a \csp whose size depends only on its candidate nodes and links, so total decision time grows approximately linearly with the number of arrivals. In contrast, MPC jointly optimizes all SRs within an epoch, with overlapping lifetimes coupling them through shared capacity constraints. Figure~\ref{fig:batch_size_vs_solve_time} compares optimization time across evaluated runs. MPC-Oracle’s median runtime rises from 2.07s to 22.38s, while its 95th percentile grows from 5.50s to 122.79s as the epoch size increases from 56 to 146 SRs. Over the same range, \expp-Adaptive's runtime increases from 0.96s to 2.40s at the median and from 1.12s to 2.69s at the 95th percentile, yielding $9.3\times$ and $45.7\times$ lower runtimes, respectively, at the upper end. This comparison favors MPC-Oracle, which assumes perfect forecasts and solves only once per epoch. A fully online MPC implementation would require repeated re-optimization after each arrival.
\section{Conclusion} \label{sec:conclusion}
We presented a pricing-based framework for online slice admission control and resource allocation in 5G and beyond mobile networks. The proposed approach processes SRs sequentially and determines admission and resource allocation by solving a single-slice optimization problem. These decisions are guided by dynamically updated resource prices that capture resource scarcity and opportunity cost. This design enables the framework to balance immediate admissions with preserving capacity for future requests while maintaining tractable online operation. We further developed an exponential pricing strategy that provides bounded worst-case performance guarantees, and we built on this design to develop an adaptive variant that learns its pricing parameters from data to improve empirical performance. We compared the proposed approach against state-of-the-art methods from the DRL- and MPC-based literature and show that it improves mean revenue by up to $32.2\%$ over DRL and $26.7\%$ over an MPC baseline with perfect forecasts while having lower computational cost.

For future work, we plan to investigate online learning approaches, study dynamic market settings where admission decisions can influence future SR arrivals, and develop scalable approximation algorithms for \csp.
\section*{Acknowledgement}
This work was supported in part by the Ontario Research Fund – Research Excellence program (Project\# ORF-RE012-051) from the Province of Ontario. The views expressed herein are those of the authors and do not necessarily reflect those of the Province.


%
%


%
%


\bibliographystyle{IEEEtran}
\bibliography{references}

\end{document}